\newcommand{\twiddle}{\mathrel|\joinrel\sim}        
\newcommand{\FK}{\mathbb{K} = (G,M,I)}                      
\newcommand{\underlinesymbol}[1]{\underline{\vphantom{y}#1}}
\newcommand{\K}{\mathbb{K^{\preceq}}}                       
\newcommand{\Kc}{{\mathfrak{B}}(\mathbb{K})}                     
\newcommand{\Cl}{\underlinesymbol{\mathfrak{B}}(\mathbb{K})}
\newcommand{\Tc}{\mathfrak{T}(\K)}                           
\newcommand{\TCl}{\underlinesymbol{\mathfrak{T}}(\K)}
\newcommand{\minO}[1]{\underlinesymbol{#1}'}                
\begin{document}
\title{Non-monotonic Extensions to Formal Concept Analysis via Object Preferences}
\titlerunning{Non-monotonic Extensions to FCA}
%
\author{Lucas Carr\inst{1}\orcidlink{0000-0001-7464-8422} \and Nicholas Leisegang\inst{1}\orcidlink{0000-0002-8436-552X} \and Thomas Meyer\inst{1}\orcidlink{0000-0003-2204-6969} \and Sebastian Rudolph\inst{2,3}\orcidlink{0000-0002-1609-2080} }
\authorrunning{L.Carr et al.}
%
\institute{University of Cape Town and CAIR, South Africa \\ 
\email{\{crrluc003,lsgnic001\}@myuct.ac.za, tommie.meyer@uct.ac.za} 
\and Technische Universität Dresden, Germany \\ 
\and
Center for Scalable Data Analytics and Artificial Intelligence Dresden/Leipzig\\
\email{sebastian.rudolph@tu-dresden.de}}
\maketitle              
\begin{abstract}
    Formal Concept Analysis (FCA) is an approach to creating a conceptual hierarchy in which a \textit{concept lattice} is generated from a \textit{formal context}. That is, a triple consisting of a set of objects, $G$, a set of attributes, $M$, and an incidence relation $I$ on $G \times M$. A \textit{concept} is then modelled as a pair consisting of a set of objects (the \textit{extent}), and a set of shared attributes (the \textit{intent}).  Implications in FCA describe how one set of attributes follows from another. The semantics of these implications closely resemble that of logical consequence in classical logic. In that sense, it describes a monotonic conditional. The contributions of this paper are two-fold. First, we introduce a non-monotonic conditional between sets of attributes, which assumes a preference over the set of objects. We show that this conditional gives rise to a consequence relation that is consistent with the postulates for non-monotonicty proposed by Kraus, Lehmann, and Magidor (commonly referred to as the KLM postulates). We argue that our contribution establishes a strong characterisation of non-monotonicity in FCA. To our knowledge, this is a novel view of FCA as a formalism which supports non-monotonic reasoning. We then extend the influence of KLM in FCA by introducing the notion of \textit{typical} concepts through a restriction placed on what constitutes an acceptable preference over the objects. Typical concepts represent concepts where the intent aligns with expectations from the extent, allowing for an exception-tolerant view of concepts. To this end, we show that the set of all typical concepts is a meet semi-lattice of the original concept lattice. This notion of typical concepts is a further introduction of KLM-style typicality into FCA, and is foundational towards developing an algebraic structure representing a concept lattice of prototypical concepts. 

    \keywords{Formal Concept Analysis  \and Non-monotonic reasoning \and Rational consequence relations \and Defeasible reasoning  }
\end{abstract}
\section{Introduction}
\label{Introduction}
Formal Concept Analysis (FCA) is a lattice-theoretic approach to representing and reasoning about concepts and hierarchies in data. The view of concepts adopted in FCA has clear philosophical underpinnings, describing a concept as a pair representing the dualism between extension, \textit{what a concept refers to}, and intension, \textit{what a concept means} \cite{ganter2012foundations,ganter2016conceptual,R2006}. As such, the setting of FCA works with data in the form of a formal context, describing a relationship between objects and attributes.

An important component of FCA involves discovering implications in the data which describe a complete correspondence between sets of attributes \cite{R2006}. A concern might be that complete correspondence is too strict a requirement, and implications may fail to capture relationships that, while useful, are only partial correspondences between attribute sets. To illustrate this point, we introduce a variation of the customary example of penguins, where we look at species of animals and some of their attributes (see \autoref{Example-running} for a more detailed description) \cite{Kaliski_2020,kraus_nonmonotonic_2002}. In our example we have three objects which have the attribute \texttt{bird}: duck, robin, and penguin. Of these, duck and robin also have the attribute \texttt{flies}, and obviously penguin does not. Consequently, the implication \texttt{bird} $\rightarrow$ \texttt{flies} is not valid in our context. One may argue that this behaviour is completely correct. Clearly, not all birds fly. A compelling response is that birds that do not fly are exceptions to the rule, and that the ability to express that birds usually do fly is useful. 

Association rules offer an existing approach to address this concern. These rules introduce a notion of confidence, which quantifies the proportion of data that conforms to a given rule, if the confidence is above a user-defined threshold are then accepted \cite{ganter2016conceptual}. While association rules are capable of capturing partial relationships, this approach may be considered somewhat blunt, relying on something analogous to \textit{majority rules}. The reliance on a threshold for rule acceptance limits the expressivity of these rules.

We propose an alternate approach which adopts the view that we can define a preference relation over the objects in our data. While we are -- at present -- agnostic about the origin of this preference relation, from an applied perspective it might represent an external sentiment that one object is more \textit{typical} than another. Partial implications between attribute sets can then be captured through the idea that the correspondence holds for \textit{preferred} objects \cite{Kaliski_2020,kraus_nonmonotonic_2002}.

Once we have a preference relation over the set of objects, and if we regard the relation as representative of the typicality of objects, we can introduce a notion of a \textit{typical concept}. The motivation for typical concepts, and what they mean, rests on the notion that, given some set of attributes, we want a coherent way to derive a concept which has the attributes we \textit{typically} expect. Continuing with the example from before, we might expect that, while the intent of the ``bird concept'' derived from the attribute \texttt{bird} does not include the attribute \texttt{flies}, it may be useful to have a notion of a typical concept which includes the attributes of prototypical birds.

The remainder of the paper is structured as follows: \autoref{Section_FCA} provides an account of some basic notions in FCA. Then, \autoref{Logical Consequence and Nonmonotonic Reasoning} provides an introduction to classical notions of consequence, non-monotonic reasoning, and an intuition for how these translate to FCA. We introduce the announced extensions to FCA in \autoref{FCA-NMR} and discuss the semantics and properties of the consequence relation they allow. In \autoref{Typ-Conc} we develop the notion of typical concepts, derived from the preferential view of objects. In \autoref{Related Work} we discuss some attempts to bring KLM-style defeasibility into other logical systems. We also discuss existing work within FCA which proposes to introduce more expressive notions of concepts. 

\section{Formal Concept Analysis}
\label{Section_FCA}
Two fundamental notions in FCA are \textit{formal context} and \textit{formal concept}. A formal context has a set-theoretic definition, where three sets of objects, attributes, and a binary relation are defined. However, in reasonably sized instances it can be represented as a cross-table, where rows represent the objects, and columns are the attributes. Naturally, the presence of an \textquoteleft$\times$\textquoteright \space at a position in the table indicates that the respective object has the attribute from the respective column.
\begin{definition}
    \label{def:formal-context}
    A \emph{formal context} is a triple $(G,M,I)$ where $G$ refers to a finite, non-empty set of objects, $M$ to a finite, non-empty set of attributes, and $I \subseteq G\times M$ is an incidence relation. For $(g,m)\in I$, we say object $g$ has attribute $m$, otherwise $g$ does not have attribute $m$.
\end{definition}
\begin{example}
    \label{Example-running}
    \autoref{fig:example-formal-context} shows a cross-table representation of a formal context, describing different animal species and some of their attributes. 
    \begin{figure}[h]
        \vspace{-2ex} 
        \centerline{
            \begin{tabular}{r@{\hspace{0.2cm}}|ccccc}
                                 & \texttt{ northern} & \texttt{ southern} & \texttt{ flies} & \texttt{ antarctic} & \texttt{ bird} \\ \hline
                \texttt{orca}    & $\times$           & $\times$           &                 & $\times$            &                \\[0.2em]
                \texttt{duck}    & $\times$           & $\times$           & $\times$        &                     & $\times$       \\[0.2em]
                \texttt{robin}   & $\times$           & $\times$           & $\times$        &                     & $\times$       \\[0.2em]
                \texttt{penguin} &                    & $\times$           &                 & $\times$            & $\times$       \\
            \end{tabular}
        }
        \caption{A context of animal species and some of their attributes. For example, Orcas are found in the northern hemisphere, southern hemisphere, and the antarctic.}
        \label{fig:example-formal-context}
    \end{figure}
    \vspace{-0.75cm}
\end{example}
In the build-up to defining formal concepts, one introduces two operators which define an order-reversing Galois connection between the power sets $\mathcal{P}(G)$ and $\mathcal{P}(M)$. These two operators, which share the same notation, describe the derivation from a set of objects to the shared attributes, and vice versa \cite{ganter2016conceptual,ganter2012foundations}.
\begin{definition}
    \label{definition: derivation operators}
    In a formal context, $\FK$, \emph{the derivation operator} $(\cdot)'$ is defined for sets $A\subseteq G$ and $B \subseteq M$ as:
    \begin{align*}
        A' & := \{m \in M \mid \forall g \in A, (g,m) \in I\} \\
        B' & := \{g \in G \mid \forall m \in B, (g,m) \in I\}
    \end{align*}
    For a set $A \subseteq G $ of objects (resp. attributes), $A'$ is just a set of attributes (resp. objects), and so $A''$ would be a set of objects (resp. attributes). The double application of derivation operators is in fact a closure operator, meaning it is \emph{extensive, idempotent, and monotonic}.
\end{definition}
Concepts are represented as a pairs of sets of objects and attributes; as one might expect, these sets are not arbitrary. In fact, they completely determine one another. Conventionally, we might denote a concept as $(A,B)$ with $A\subseteq G, B\subseteq M$. This may be a useful notation to express the demarcation between the extent and intent, but it is not strictly necessary, for, $(A, A')$ and $(B', B)$ would refer to the same concept.
\begin{definition}
    \label{def:formal-concept}
    Let $\mathbb{K} = (G,M,I)$ be a formal context. A pair $(A,B)$, where $A\subseteq G$ and $B \subseteq M$, is a \emph{formal concept} iff $A' = B$ and $B' = A$.
\end{definition}
In fact, given an arbitrary set of objects, $A \subseteq G$, one can entirely describe a concept by $(A'', A')$ -- the same principle holds for sets of attributes \cite{ganter2012foundations}. If we refer back to \autoref{Example-running}, starting from two sets of objects and attributes, respectively, $\{\texttt{duck, penguin}\}$ and $\{\texttt{antarctic}\}$, we find the following concepts:
{\small
\begin{align*}
    \big(\{\texttt{duck,penguin}\}'' , \{\texttt{duck,penguin}\}' \big) & = \big(\{\texttt{duck,robin,penguin}\}, \{\texttt{southern,bird}\}\big) \\ 
    \big(\{\texttt{antarctic}\}' , \{\texttt{antarctic}\}'' \big) & = \big(\{\texttt{penguin,orca}\}, \{\texttt{southern,antarctic}\}\big)
\end{align*}}
These two concepts suggest that one can introduce a partial ordering $\leq$ to concepts (induced by the subset relationship on their extents), establishing the notion of sub- and super-concepts.
Equipped with this ordering, the set of all concepts of a formal context $\mathbb{K}$, denoted by $\Kc$, forms a complete lattice $\Cl := (\Kc, \leq)$, referred to as a concept lattice \cite{ganter2016conceptual,ganter2012foundations}. \autoref{fig:example_lattice} provides an example of a concept lattice. Attributes connected to a concept from above (inclusive) are part of that concept's intent, while objects connected from below (inclusive) are in the extent. For a set of concepts, $\{(A_t, B_t) \mid t \in T\} \subseteq \mathfrak{B}({\mathbb{K})}$, the supremum (\textit{smallest super-concept}), and infimum (\textit{greatest sub-concept}) are given by
\begin{align*}
    \bigvee_{t\in T} (A_t,B_t)    & := \bigl( (\bigcup_{t\in T} A_t)'',  \bigcap_{t\in T} B_t\bigr)  \ \ \mbox{as well as}\\
    \bigwedge_{t \in T} (A_t,B_t) & := \bigl( \bigcap_{t\in T} A_t,  (\bigcup_{t\in T} B_t)'' \bigr) . \\
    \vspace{-0.5cm}
\end{align*}
\subsection{Implications}
\label{Preliminaries - FCA - Implications}
A significant aspect of FCA involves implications between sets of attributes. These express the notion that certain attributes indicate the presence of others, in all objects of a formal context \cite{ganter2016conceptual,R2006}.
\begin{definition}
    \label{def:classical implications}
    Let $\mathbb{K} = (G,M,I)$ be a formal context. An \emph{implication over $M$} is an expression of the form $A \rightarrow B$ with $A,B \subseteq M$. We say that the implication is \textit{respected} by another set $C\subseteq M$, iff $A \not \subseteq C$ or $B \subseteq C$. $\mathbb{K}$ respects the implication, written $\mathbb{K} \models A\rightarrow B$, iff for every $g \in G$, $g'$ respects $A \rightarrow B$.
\end{definition}
It is useful to note that a formal context $\mathbb{K} = (G,M,I)$ respects an implication $A\rightarrow B$ over $M$ exactly if $B\subseteq A''$, which, in turn, is equivalent to $A' \subseteq B'$.

\section{Logical Consequence and Nonmonotonic Reasoning}
\label{Logical Consequence and Nonmonotonic Reasoning}
\subsection{Logical Consequence}
\label{logical consequence}
A classical notion of logical consequence describes the circumstances under which one sentence is said to follow (logically) from another \cite{Etchemendy_Tarski}. In a model-theoretic view, it is given as follows. 
\begin{definition}
    \label{definition: logical consequence}
    For two sentences $\alpha, \beta$ in the language $\mathcal{L}$, we say that $\beta$ is a \emph{logical consequence} of $\alpha$, expressed as $\alpha \vDash \beta$, iff for every valuation $u\in \mathcal{U}$ where $u \Vdash \alpha$ then also $u \Vdash \beta$.
\end{definition}
This definition can easily be generalised to the notion of \textit{entailment}: given a set of facts it would be useful to know \textit{what else} we can know, or, what is entailed \cite{Kaliski_2020}.
\begin{definition}
    \label{definition: classical entailment}
    Given a set of sentences $\mathcal{KB}$, another sentence $\alpha$ is \emph{entailed} by $\mathcal{KB}$ iff for every valuation $u \in \mathcal{U}$ with $u \Vdash \mathcal{KB}$ also $u \Vdash \alpha$ holds. This is expressed as $\mathcal{KB} \models \alpha$.
\end{definition}
A consequence operator, $\mathcal{C}n$, provides a general way to derive all sentences that should follow from a set of sentences $\mathcal{K}\mathcal{B}$, under \textit{some} notion of logical consequence \cite{Kaliski_2020}. Using classical entailment from \autoref{definition: classical entailment}, we could have $\mathcal{C}n(\mathcal{K}\mathcal{B}) := \{\alpha \mid \mathcal{K}\mathcal{B} \models \alpha \}$ \cite{citkin2021consequencerelationsintroductiontarskilindenbaum,Kaliski_2020}. 
A Tarskian consequence operator satisfies the following properties (which describe a closure operator). 
\begin{align}
     & \text{Monotonocity:} & \; \textit{if } \; \Gamma \subseteq \Gamma'  \; \text{then} \; \mathcal{C}n(\Gamma) \subseteq \mathcal{C}n(\Gamma' ) \\
     & \text{Idempotence:}  & \; \mathcal{C}n(\Gamma) = \mathcal{C}n(\mathcal{C}n(\Gamma))                                                         \\
     & \text{Inclusion:}    & \; \Gamma \subseteq \mathcal{C}n(\Gamma)
\end{align}
\subsection{Consequence Relations}
\label{Consequence Relations}
A more abstract notion of consequence is a \textit{consequence relation}. This is a set of pairs, $\{(\Gamma_1, \gamma_1), \ldots, (\Gamma_n, \gamma_n), \ldots \}$, where it is typical to allow $\Gamma_i$ to represent a set of formulae, and $\gamma_i$ to represent a single formula in the language. The inclusion of a pair $(\Gamma_i, \gamma_i)$ in the consequence relation—denoted as $\Gamma_i \vdash \gamma_i$—means that $\gamma_i$ can be inferred from $\Gamma_i$ \cite{gabbay1995general,Kaliski_2020}.

Consequence relations may be characterised by the properties they satisfy and, as such, correspond to a certain kind of reasoning---an example of this is found in \autoref{definition: rational consequence} \cite{gabbay1995general}. Conversely, an algorithmic description of a kind of reasoning may in turn give rise to a consequence relation, the properties of which provide strong intuition for the pattern of reasoning. 


\subsection{Non-monotonic Reasoning}
\label{Non-monotonic Reasoning}

Non-monotonic reasoning is concerned with developing formal reasoning process in which a conclusion drawn under a premise can be withdrawn under the addition of another premise \cite{kraus_nonmonotonic_2002}. The justifications for why we may want to reason non-monotonically are easy to accept -- it is quite obvious that when humans reason we do so under the implicit assumption that, upon receiving new information we can change our mind. Without this assumption, navigating life would be very difficult; we would hesitate to come to any conclusion out of fear it may be the wrong one. Moreover, we frequently make statements for which we know exist exceptions, under the assumption that upon encountering an exception it would be treated as such \cite{makinson2005go}.

In the setting of propositional logic, the problem is often introduced in the following way \cite{Kaliski_2020,kraus_nonmonotonic_2002,stanford-nonmonotonic}: we want to accept that "penguins are birds", "birds usually fly" and "penguins do not fly". However, when we translate this into classical logic with the propositions $\texttt{penguin}\rightarrow \texttt{bird}$, $\texttt{bird}\rightarrow \texttt{fly}$ and $\texttt{penguin}\rightarrow \neg \texttt{fly}$, we are forced to conclude $\neg\texttt{penguin}$. With respect to the attribute logic of FCA, we find a similar issue. Consider the context in \autoref{Example-running}. We do not have $\{\texttt{bird}\} \rightarrow \{\texttt{flies}\}$ as a valid implication, since penguins are an object with \texttt{bird} but not \texttt{flies}. This illustrates the final point made in the last paragraph -- that monotonicity prevents us from making statements with known exceptions. 

In the realm of concepts, we encounter another instance of this problem: the concept determined by \texttt{bird} is not a sub-concept of the concept determined by $\texttt{flies}$. Again, we lack the expressivity for the idea that \textit{typical} birds are flying animals. In \autoref{FCA-NMR} and \autoref{Typ-Conc} we propose a solution to this lack of expressivity drawing inspiration from \cite{kraus1990nonmonotonic,kraus_nonmonotonic_2002}.       
\subsection{Rational Consequence Relations}
\label{Rational Consequence Relations}
The style of non-monotonic reasoning we aim to develop in FCA is that of rational consequence relations. A rational consequence relation, $\twiddle_R$, is based on preferential logics, where an order (or ranking) over valuations conveys the notion that certain valuations are preferred to others \cite{kraus1990nonmonotonic,lehmann1994what,Shoham}. The semantics describe a notion of consequence where, if $\alpha, \beta$ are formulae in the language then $\beta$ is a rational consequence of $\alpha$, $\alpha \twiddle_R \beta$, iff $\beta$ is true in all the \textit{most preferred} models of $\alpha$. 

The KLM postulates, introduced in \cite{kraus1990nonmonotonic,lehmann1994what}, with the inclusion of \textit{Rational Monotonicity} are a set of inference rules that characterise a rational consequence relation.

\begin{definition}
    \label{definition: rational consequence}
    A consequence relation, $\twiddle$, constitutes a rational consequence relation iff it satisfies Reflexivity, Left Logical Equivalence (LLE), Right Weakening (RW), Cut, And, Or, Cautious Monotonicity and Rational Monotonicity. 
\end{definition}
\noindent
\textbf{Reflexivity} \quad $A \twiddle A$ \quad Reflexivity is a somewhat basic notion of any notion of consequence; it essentially prevents a self-defeating pattern of reasoning where given something, you conclude as a consequence not that thing. \\

\noindent
\textbf{Left Logical Equivalence (LLE)} \quad $\frac{ \models A \equiv B, A \twiddle C}{B \twiddle C}$ \quad LLE enforces the notion that two things that are equivalent (under a coherent notion of equivalence) should have the exact same consequences.\\

\noindent
\textbf{Right Weakening (RW)} \quad $\frac{ \models A \rightarrow B, C\twiddle A}{C \twiddle B}$ \quad RW states that the consequence of a classical implication, which has a defeasible consequence as a premise, can itself be derived from the original defeasible implication.\\

\noindent
\textbf{Cut} \quad $\frac{A \wedge B \twiddle C, A \twiddle B}{A \twiddle C}$ \quad Cut allows us to use existing defeasible consequences in the premise of a new defeasible implication. However, the new implication is subject to failure should the defeasible conclusion in its premise be retracted.\\

\noindent
\textbf{Or} \quad $\frac{A \twiddle C, B \twiddle C}{A \vee B \twiddle C}$ \quad Or is that given distinct premises with a common defeasible consequence, we should be able to draw this conclusion from the disjunction of the premises -- that is, we need not explicitly know which one is true.\\

\noindent
\textbf{And} \quad $\frac{A \twiddle B, A \twiddle C}{A \twiddle B \wedge C}$ \quad And tells us that two consequences can be concluded at the same time.\\

\noindent
\textbf{Cautious Monotonicity (CM)} \quad $\frac{A \twiddle B, A \twiddle C}{A \wedge B \twiddle C}$ \quad CM ensures that adding a defeasible consequence that could already have been derived from our premises should never invalidate another defeasible consequence that could be derived from our original premises. \\

\noindent
\textbf{Rational Monotonicity (RM)} \quad $\frac{A \twiddle B, A \not \twiddle \neg C}{A \wedge C \twiddle B}$ \quad RM expresses the notion, similar to conflicts in Default Logic, that only when a premise that was expected to be false is added, should we retract a rational consequence. In another sense, we can assume that when new information that does not explicitly contradict existing knowledge, we can retain existing conclusions \cite{Kaliski_2020}. 


\section{Introducing Nonmonotonicity in FCA}
\label{FCA-NMR}
\subsection{Extended Formal Context}
\label{FCA-NMR_Ext-Cont}

As a precursor to the non-monotonic conditionals in \autoref{FCA-NMR_NMR-Impl},  we introduce the notion of an extended formal context.
\begin{definition}
    \label{definition-extended_context}
    Let $\mathbb{K} = (G,M,I)$ be a formal context. An \emph{extended formal context} simply adds a partial order over $G$. We denote this as a quadruple, $\K = (G,M,I,\preceq)$.
\end{definition}
The partial ordering is intended to convey a preference relation between objects. That is, for two objects, $g, h \in G$, if $g \preceq h$, then we say that $g$ is more preferred, or more typical, in comparison to $h$. For example, $\texttt{robin}\preceq \texttt{penguin}$ tells us a robin is ``more typical'' than a penguin in our context.
\subsection{Minimisation}
\label{FCA-NMR_Min}
The preference relation between objects provided by an extended formal context enables us to define how the derivation operators might behave when restricted to only care about minimal objects. We formalise this notion as a \textit{minimised derivation}.
\begin{definition}
    \label{definition: minimised derivation}
    Let $\K = (G,M,I,\preceq)$ be an extended formal context, and let $B\subseteq M$. Then the \emph{minimised derivation}, $\minO{B}$, of $B$, is the set
    \[\minO{B} := \{g \in B' \mid \nexists h \in B' \text{ such that } h \prec g\}\]
\end{definition}
As a reminder, for a set of attributes, $B \subseteq M$, $B'$ provides us with all the objects which \textit{have} all attributes of $B$. Following this, $B''$ extends $B$ by including all other attributes which are shared by those objects which share attributes $B$. 

If $\minO{B}$ describes the process of going from a set of attributes to the minimal objects which share $B$, a reasonable question might concern the procedure for returning to a ``closed'' set of attributes: presumably, the set of attributes common to those objects described by $\minO{B}$. This is, in fact, just the composition of \autoref{definition: minimised derivation} and \autoref{definition: derivation operators}, taking the form $(\minO{B})'$ - henceforth referred to as a \textit{minimised-return operation}.
\begin{theorem}
    \label{properties minimised-return operator}
    Let $\K = (G, M, I, \preceq)$ be an extended formal context, the minimised-return operator, $(\minO{\cdot})'$, applied to a set of attributes is nonmonotonic, extensive, and idempotent.
\end{theorem}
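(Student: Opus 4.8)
The plan is to establish the three properties separately, since they are of rather different character: extensivity and idempotence are structural consequences of how minimisation interacts with the order-reversing Galois connection of \autoref{definition: derivation operators}, whereas nonmonotonicity needs only a single witnessing context. Throughout I would write $f(B) := (\minO{B})'$ for the minimised-return operator. \emph{Extensivity} is then immediate: by \autoref{definition: minimised derivation} we always have $\minO{B} \subseteq B'$, so order-reversal of the object-side derivation gives $B'' = (B')' \subseteq (\minO{B})' = f(B)$, and since the double derivation is a closure operator (hence extensive) we have $B \subseteq B''$, whence $B \subseteq f(B)$. The degenerate case $B' = \emptyset$, where $\minO{B} = \emptyset$ and $f(B) = \emptyset' = M$, is subsumed.

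For \emph{idempotence}, set $C := f(B) = (\minO{B})'$. Two inclusions come for free: every $g \in \minO{B}$ has, by definition of $C$, all attributes of $C$, so $\minO{B} \subseteq C'$; and $B \subseteq C$ (extensivity) together with order-reversal gives $C' \subseteq B'$. The heart of the argument is a lemma about the finite poset $(G,\preceq)$: if $X \subseteq Y \subseteq G$ and every $\preceq$-minimal element of $Y$ already lies in $X$, then $X$ and $Y$ have exactly the same set of $\preceq$-minimal elements. One inclusion is trivial; for the other, given a minimal element $g$ of $X$ and a hypothetical $h \prec g$ with $h \in Y$, finiteness of $G$ lets us descend from $h$ to a $\preceq$-minimal element of $Y$ lying below $g$, which by hypothesis belongs to $X$, contradicting minimality of $g$ in $X$. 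Applying the lemma with $X = C'$ and $Y = B'$ yields $\minO{C} = \minO{B}$, and therefore $f(f(B)) = (\minO{C})' = (\minO{B})' = C = f(B)$.

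For \emph{nonmonotonicity} it suffices to exhibit one extended context and sets $B_1 \subseteq B_2 \subseteq M$ with $f(B_1) \not\subseteq f(B_2)$. I would reuse the context of \autoref{Example-running} with the preference given by $\texttt{duck} \prec \texttt{penguin}$ and $\texttt{robin} \prec \texttt{penguin}$ (and no other strict comparabilities). Then $B_1 = \{\texttt{bird}\}$ gives $B_1' = \{\texttt{duck},\texttt{robin},\texttt{penguin}\}$, so $\minO{B_1} = \{\texttt{duck},\texttt{robin}\}$ and $f(B_1) = \{\texttt{northern},\texttt{southern},\texttt{flies},\texttt{bird}\}$, whereas $B_2 = \{\texttt{bird},\texttt{antarctic}\} \supseteq B_1$ gives $B_2' = \minO{B_2} = \{\texttt{penguin}\}$ and $f(B_2) = \{\texttt{southern},\texttt{antarctic},\texttt{bird}\}$; since $\texttt{flies} \in f(B_1) \setminus f(B_2)$, the operator fails to be monotonic.

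I expect the poset lemma inside the idempotence step to be the only real obstacle. It is not deep, but it is the one place where finiteness of $G$ and antisymmetry of $\preceq$ are genuinely needed, and a naive attempt that tacitly assumes $\minO{B}$ is downward closed (which it is not) would break down precisely here; the argument has to be routed through "descend to a minimal element of the larger set" rather than through any claimed closure of $\minO{B}$ under $\preceq$.
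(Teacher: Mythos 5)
Your proposal is correct, and for nonmonotonicity and extensivity it is essentially the paper's argument: the same counterexample from \autoref{Example-running} with $\{\texttt{bird}\}\subseteq\{\texttt{bird,antarctic}\}$, and the same observation that $\minO{B}\subseteq B'$ forces $B\subseteq(\minO{B})'$ (you route it through order-reversal of the Galois connection, the paper argues element-wise; these are the same idea). The genuine difference is in idempotence. The paper also reduces to showing $\minO{A}=\minO{C}$ for $C=(\minO{A})'$, but its argument only establishes that no element of $\minO{C}$ is strictly below an element of $\minO{A}$ and vice versa, and then asserts that the two sets ``refer to the same objects'' --- which does not follow from mutual non-domination alone. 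Your version closes exactly this gap: you isolate the poset lemma that if $X\subseteq Y$ and every $\preceq$-minimal element of $Y$ lies in $X$, then $X$ and $Y$ have the same minimal elements, and you prove the nontrivial inclusion by descending (using finiteness of $G$) from a hypothetical $h\prec g$ in $Y$ to a minimal element of $Y$, which by hypothesis sits in $X$ and contradicts minimality of $g$ there. Your closing remark is also on point: the descent through the \emph{larger} set is what makes the argument work, since $\minO{B}$ is not downward closed. So your proof buys rigour where the paper's is loosest, at the cost of stating and proving one extra (easy) lemma.
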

\begin{proof}
    If the minimised-return operator were monotonic, for the sets $A,B \subseteq M$, $A\subseteq B$ would imply $(\minO{A})' \subseteq (\minO{B})'$. Refer back to the formal context in \autoref{Example-running} extended with the partial order \texttt{robin} $\preceq$ \texttt{penguin}. Let $A := \{\texttt{bird}\}$ and $B := \{\texttt{bird,antarctic}\}$. Clearly, $A \subseteq B$, however, application of the minimised-return operator gives $(\minO{A})' = \{\texttt{bird, flies, southern, northern}\}$ and $(\minO{B})' = \{\texttt{bird,antarctic,southern}\}$, and so $(\minO{A})' \not \subseteq (\minO{B})'$.

    To show that the minimised-return operator is extensive, it should hold that for a set $A \subseteq M$, $A \subseteq (\minO{A})'$. We already know that in the classical case, $A \subseteq A''$. So, for all objects $g \in A'$, it is the case that $A \subseteq g'$. It is also clear that the minimised derivation of a set of attributes is a subset of the classical derivation -- that is, $\minO{A} \subseteq A'$. This means that for all objects $h \in \minO{A}$, $A \subseteq h'$. It should then be straightforward to see that $A$ is a subset of the set given by $\bigcap \{g' \mid g\in \minO{A}\}$, which is equivalent to $A \subseteq (\minO{A})'$. 

    The final task is to show that the minimised-return operator is idempotent. Given a set $A\subseteq M$, let $C := (\minO{A})'$. It is enough to show that $\minO{A} = \minO{C}$. $C$ is the set of attributes such that all objects, $g \in \minO{A}$, have $C$ in their intent. It is also clear, since all objects in $\minO{A}$ have $A$ in their intent, that $A\subseteq C$. Then, for $h \in \minO{C}$ it follows that $h$ has attributes $C$, and thus $A$, in its intent. Since $h$ is minimal w.r.t. $C$, and $g \in C'$, it cannot be the case that $g \preceq h$. Conversely, since $g$ is minimal w.r.t. $A$, and $h \in A'$, it cannot be that $h \preceq g$. Then, since both $g$ and $h$ are elements of $A'$ and $C'$, and are arbitrary, the sets $\minO{A}$ and $\minO{C}$ refer to the same objects. Then, applying a derivation to the same sets would of course yield the same result, and so the minimised-return operator is idempotent. \qed
\end{proof}
\subsection{Non-monotonic Conditional through Minimisation}
\label{FCA-NMR_NMR-Impl}
From the discussion in \autoref{Preliminaries - FCA - Implications} it is clear that the semantics of classical implications in FCA can be defined through applications of closure operators on sets of attributes. With the introduction of the minimised-return operator, one might expect to find a different notion of consequence. 

The minimised-return operator still refers to the same kind of process as the closure-operator, going from a set of attributes to a set of objects, and then back to a set of attributes. The similarity enable us to define a semantics which is only a minor departure from the original implication. For now we will use \textit{non-monotonic conditional} to refer to the minimised-derivation based implications, and denote this by $\rightsquigarrow$.  
\begin{definition}
    \label{def non-monotonic conditional}
    Let $\K = (G,M,I,\preceq)$ be an extended formal context, with $A,B\subseteq M$. $\K$ respects a \emph{non-monotonic conditional} $\K \models A\rightsquigarrow B$ iff $B \subseteq (\minO{A})'$ which is equivalent to $\minO{A} \subseteq B'$.
\end{definition}
Although we have yet to prove the characterisation, the non-monotonic conditional from above describes a notion of consequence with strong semblance to the preferential consequence relations discussed in \autoref{Rational Consequence Relations}. 
\begin{example}
    \label{example: nixon diamon variation}
    Let the cross table below represent an extended formal context, where the partial order over objects is: \texttt{Achilles $\preceq$ Jason, Minos}.
    \begin{small}
    \begin{figure}
        \begin{center}
            \begin{tabular}{r@{\hspace{0.2cm}}|ccc}
                                  & \texttt{ warrior} & \texttt{ demigod} & \texttt{ hero} \\ \hline
                \texttt{Jason}    & $\times$          &                   & $\times$       \\[0.2em]
                \texttt{Achilles} & $\times$          & $\times$          & $\times$       \\[0.2em]
                \texttt{Minos}    &                   & $\times$          &
            \end{tabular}
        \end{center}
        \caption{Extended formal context of figures in Greek mythology}
    \end{figure}
    \end{small}

    In Greek mythology there is a strong correspondence between figures who are heroes, and demigods. So much so that there is debate as to whether \textit{hero} and \textit{demigod} actually have distinct meanings. Thus, with some creative licence, we may wish to express the notion that \texttt{hero} \textit{usually implies} \texttt{demigod}. The classical implication, of course, fails to express this - as \texttt{Jason} is a counter-example. Instead, we express this notion as the non-monotonic conditional \texttt{hero} $\rightsquigarrow$ \texttt{demigod}.
\end{example}

In our definition of classical implications, see \autoref{def:classical implications}, we first introduced $\rightarrow$ as an object-level operator, saying that a set $C\subseteq M$ respects an implication $A \rightarrow B$ iff $A \not \subseteq C$ or $B \subseteq C$ (we can think of $C$ as an object intent). However, most of the time when we talk about implications in FCA, we really refer to the meta level notion analogous to entailment that $K\models A\rightarrow B$.

For non-monotonic conditionals it makes even less sense to speak about the object level, we are interested in what our ordering allows us to conclude -- and the ordering is explicitly on the meta-level. Consequently, whenever we speak about a non-monotonic conditional, we are speaking about the implication over the entire formal context.

\subsection{Non-monotonic Conditionals and the KLM Postulates}
\label{non-monotonic conditionals and the KLM Postulates}
We now present an argument that the consequence relation given by the non-monotonic conditional described in the previous sub-section satisfies all the KLM postulates required to be a characterisation of rational consequence relations. Before we prove this characterisation in \autoref{Appendix}, we should remind ourselves that the postulates are usually described in the language of some truth-theoretic logic. A consequence being that some initial work needs to be translated into the attribute-logic formalism considered here. 

For \textit{Reflexivity} and \textit{RW}, the translation is obvious and doesn't require additional intuition. All that is required for \textit{LLE} is to describe what a notion of equivalence between sets of attributes means. To this end, we say that in a formal context two attribute sets are equivalent, $A\equiv B$, iff $A' = B'$. \textit{Cut} says that if $\alpha \wedge \beta \twiddle \gamma$ and $\alpha \twiddle \beta$ then $\alpha \twiddle \gamma$. The conjunction of two formulae in a truth-theoretic logic requires satisfaction of each formula. In attribute logic, the equivalent notion is given by the union of two attribute sets, satisfied by objects with \textit{all} attributes from both sets. The conjunction in \textit{CM} and \textit{And} is re-phrased in the same way. 

It is not immediately clear what \textit{RM} might mean, as it is uncommon to talk about implications with a negation. We use the following definitions to make this notion explicit. 
\begin{definition}
    \label{def: negation in implication}
    Let $\FK$ be a formal context and $A \rightarrow \neg B$ an implication over $M$. We say that a set $C \subseteq M$ respects the implication $A \rightarrow \neg B$ iff $A \not\subseteq C$ or $B \not\subseteq C$. $\mathbb{K}$ respects the implication, $\mathbb{K} \models A\rightarrow \neg B$, iff for every object $g$, $g'$ respects $A\rightarrow \neg B$. Equivalently, $A' \cap B' = \emptyset$.
\end{definition}

\begin{definition}
    \label{def: negation in preferential implcation}
    An extended formal context, $\K$, respects the non-monotonic conditional $A\rightsquigarrow \neg B$ iff, for every object $g$ in $\minO{A}$, $B \not \subseteq g'$. Equivalently, $\K \models A \rightsquigarrow \neg B$ iff $\minO{A} \cap B' = \emptyset$.
\end{definition}

The intuition is that one set of attributes corresponds to the absence of another if there are no objects which have both sets in their intent. On the object level of a formal context, for an object $g \in G$, if $g \not \models A\rightarrow B$ then $g \models A\rightarrow \neg B$. We do not, however, have that if $\mathbb{K} \not \models A\rightarrow B$ then $\mathbb{K} \models A \rightarrow \neg B$. Translating the intuition of $A\rightarrow \neg B$ to $A \rightsquigarrow \neg B$ is straightforward. The non-monotonic conditional, $A\rightsquigarrow \neg B$, concerns only objects $g\in \minO{A}$, which, by definition of the minimised derivation operator, must have $A$ in their intent. Consequently, we need only show that $g$ does not have $B$ in its intent. 

We have yet to address the \textit{Or} postulate. The difficulty here is that attribute logic does not have (nor is it intuitive to introduce) a notion of disjunction \cite{5368560}. As such, we recognise a departure from the KLM framework, and omit the \textit{Or} postulate entirely. Consequence relations that satisfy only \textit{Reflexivity, LLE, RW, Cut}, and \textit{CM} are considered \textit{cumulative consequence relations} \cite{kraus_nonmonotonic_2002}. However, while we do not have\textit{Or}, we show that we still have \textit{Rational Monotonicity}, and continue to regard our relation as a characterisation of \textit{rational consequence*} (we indicate absence of the \textit{Or} postulate with a star). 
\begin{theorem}
    \label{Satisfies rational consequence}
    The consequence relation derived from $\rightsquigarrow$ satisfies Reflexivity, Left Logical Equivalence, Right Weakening, Cut, And, Cautious Monotonicity, and Rational Monotonicity, and thus, constitutes a rational consequence relation*. 
\end{theorem}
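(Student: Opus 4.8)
The plan is to verify each of the seven postulates separately, translating each into the language of the minimised-return operator $(\minO{\cdot})'$ and its companion $\minO{\cdot}$, using the equivalence $\K \models A \rightsquigarrow B \iff B \subseteq (\minO{A})' \iff \minO{A} \subseteq B'$ from \autoref{def non-monotonic conditional}, together with the negation semantics $\K \models A \rightsquigarrow \neg B \iff \minO{A} \cap B' = \emptyset$ from \autoref{def: negation in preferential implcation}. Throughout I will use the fact that conjunction of attribute sets is realised by union, so that $A \wedge B$ becomes $A \cup B$, and equivalence $A \equiv B$ means $A' = B'$; note also the basic identity $(A \cup B)' = A' \cap B'$, which will be used repeatedly.

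First I would dispatch the easy cases. \textbf{Reflexivity} $A \rightsquigarrow A$ is immediate from extensivity of $(\minO{\cdot})'$, already proved in \autoref{properties minimised-return operator}: $A \subseteq (\minO{A})'$. \textbf{RW} follows because if $\K \models A \rightarrow B$ (classically) then $A' \subseteq B'$, and combined with $\minO{C} \subseteq A'$ we get $\minO{C} \subseteq B'$, i.e. $\K \models C \rightsquigarrow B$. \textbf{LLE} is equally direct: if $A' = B'$ then $\minO{A} = \minO{B}$ by definition of the minimised derivation (which depends on $A$ only through $A'$), so the two conditionals coincide. \textbf{And}: from $\minO{A} \subseteq B'$ and $\minO{A} \subseteq C'$ we get $\minO{A} \subseteq B' \cap C' = (B \cup C)'$, which is exactly $\K \models A \rightsquigarrow B \wedge C$.

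The substantive work is in \textbf{Cut}, \textbf{CM}, and \textbf{RM}, and the crux of all three is understanding how $\minO{A \cup B}$ relates to $\minO{A}$ when $\K \models A \rightsquigarrow B$. The key lemma I would establish is: if $\minO{A} \subseteq B'$, then $\minO{A \cup B} = \minO{A}$. The inclusion $\minO{A} \subseteq (A \cup B)'$ holds since every $g \in \minO{A}$ lies in $A'$ and, by hypothesis, in $B'$; and such $g$ is still $\preceq$-minimal in the smaller set $(A\cup B)' \subseteq A'$. Conversely any $h \in \minO{A \cup B} \subseteq A'$ that is not in $\minO{A}$ would be strictly above some $g' \in \minO{A} \subseteq (A\cup B)'$, contradicting minimality of $h$. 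With this lemma, \textbf{Cut} is immediate: given $A \cup B \rightsquigarrow C$ and $A \rightsquigarrow B$, the lemma gives $\minO{A} = \minO{A \cup B} \subseteq C'$, so $A \rightsquigarrow C$. \textbf{CM} is the same in reverse: $A \rightsquigarrow B$ gives $\minO{A \cup B} = \minO{A}$, and $A \rightsquigarrow C$ gives $\minO{A} \subseteq C'$, hence $\minO{A \cup B} \subseteq C'$.

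The hard part will be \textbf{RM}, because unlike the others its hypothesis is a \emph{non}-conditional $A \not\rightsquigarrow \neg C$, i.e. $\minO{A} \cap C' \neq \emptyset$, and I can no longer invoke the clean lemma above. Here I would argue directly: pick $g \in \minO{A} \cap C'$, so $g \in (A \cup C)'$. I claim $\minO{A \cup C} \subseteq \minO{A}$: any $h \in \minO{A \cup C} \subseteq A'$ cannot satisfy $h' \prec g'$ for any $g' \in \minO{A}$, since such $g'$ — wait, one must be careful, as $g'$ need not lie in $(A\cup C)'$. The correct move is: suppose $h \in \minO{A \cup C}$ but $h \notin \minO{A}$; then there is $g' \prec h$ with $g' \in A'$; but is $g' \in C'$? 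Not necessarily. So instead I would show $\minO{A \cup C} \subseteq \minO{A}$ fails in general and that RM needs a genuinely different argument — namely, take the particular minimal $g \in \minO{A}\cap C'$; for any $h \in \minO{A\cup C}$, if $h \prec g$ then $h \in A'$ with $h \prec g$ contradicts $g \in \minO{A}$, so $h \not\prec g$; combined with $g \in (A\cup C)'$ and $g \in \minO{A} \subseteq B'$ (using $A \rightsquigarrow B$), we conclude $g \in \minO{A \cup C}$... and then need every element of $\minO{A\cup C}$ to be in $B'$. This last step is the real obstacle: it requires that minimality in $(A\cup C)'$ forces membership in $B'$, which I expect to follow from the fact that $\minO{A} \subseteq B'$ together with a careful comparison of minimal elements of $A'$ and of the subset $(A\cup C)'$, using that $\preceq$ is a partial order on a finite set so every element dominates a minimal one. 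I would isolate this as a lemma — \emph{if $\minO{A} \subseteq B'$ and $\minO{A} \cap C' \neq \emptyset$ then $\minO{A \cup C} \subseteq B'$} — and prove it by taking $h \in \minO{A\cup C}$, finding a minimal $g_0 \in A'$ with $g_0 \preceq h$, showing $g_0 \in \minO{A} \subseteq B'$, and then arguing $h = g_0$ or at least $h \in B'$ via minimality of $h$ in $(A\cup C)'$ against the witness $g \in \minO{A}\cap C'$. Getting this finite-poset bookkeeping exactly right is where the proof will demand the most care.
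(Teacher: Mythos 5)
Your handling of Reflexivity, LLE, RW, And, Cut and CM is correct and follows essentially the same route as the paper: the easy postulates reduce to $\minO{A}\subseteq A'$, to the observation that $\minO{A}$ depends on $A$ only through $A'$, to transitivity of $\subseteq$, and to the identity $(A\cup B)'=A'\cap B'$ (the paper's \autoref{lemma 1}); and your key lemma --- if $\minO{A}\subseteq B'$ then $\minO{(A\cup B)}=\minO{A}$ --- is exactly the paper's \autoref{lemma-imp}, proved the same way (minimal elements of $A'$ that survive into the smaller set $(A\cup B)'$ stay minimal there, and finiteness of $G$ guarantees every non-minimal element of $A'$ lies strictly above some element of $\minO{A}$). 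Your reading of $A\equiv B$ as $A'=B'$ for LLE is in fact slightly more general than the appendix's $A=B$ and works for the same reason.

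The gap is Rational Monotonicity, and you have located it precisely: the auxiliary lemma you propose --- if $\minO{A}\subseteq B'$ and $\minO{A}\cap C'\neq\emptyset$ then $\minO{(A\cup C)}\subseteq B'$ --- is false for arbitrary partial orders, so no amount of finite-poset bookkeeping will close it. Take $G=\{g_1,g_2,g_3\}$ and $M=\{a,b,c\}$ with $g_1'=\{a,b\}$, $g_2'=\{a,c\}$, $g_3'=\{a,b,c\}$, and the order $g_1\prec g_2$ with $g_3$ incomparable. For $A=\{a\}$, $B=\{b\}$, $C=\{c\}$ we get $\minO{A}=\{g_1,g_3\}\subseteq B'=\{g_1,g_3\}$ and $\minO{A}\cap C'=\{g_3\}\neq\emptyset$, yet $(A\cup C)'=\{g_2,g_3\}$ has $g_2$ as a minimal element (its only strict predecessor $g_1$ has dropped out of the extent), so $\minO{(A\cup C)}=\{g_2,g_3\}\not\subseteq B'$ and $A\cup C\not\rightsquigarrow B$. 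The direction you do establish, $\minO{A}\cap C'\subseteq\minO{(A\cup C)}$, is also all that the paper's own argument establishes; the paper then asserts the two sets are ``equivalent,'' and it is exactly the reverse inclusion that fails in the example above. This is the expected situation from the KLM literature: smooth partial orders yield preferential consequence, while RM characterises \emph{ranked} (modular) orders, under which a dropped minimal element such as $g_1$ would force $g_2$ to remain non-minimal in $(A\cup C)'$. So your proof cannot be completed as proposed without strengthening the hypothesis on $\preceq$ (e.g., to a total preorder), and the same caveat applies to the theorem's RM claim as stated.
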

This gives us a notion of non-monotonic consequence that corresponds to what we may think \textit{logical} non-monotonic reasoning should look like, as described by rational consequence. We are able to use intermediary results as the basis for more conclusions through the use of \textit{Cut} and \textit{CM} \cite{Kaliski_2020,stanford-nonmonotonic}. \textit{RM} is a useful property in the sense that it reduces the work one has to do - unless we have some consequence that is contradictory to what we expect, we can retain existing conclusions. The notion of consequence also preserves classical implications in FCA, and so it is strictly more expressive than regular implications in FCA \cite{stanford-nonmonotonic}. 

Now is a good place to remind ourselves what we gain from introducing non-monotonic conditionals to FCA. If we accept rational consequence relations as a good account of non-monotonic reasoning, and that we can impose a preference relation on objects in a formal context, non-monotonic conditionals provide FCA with the expressivity to reason about what may be the \textit{typical} case.  
\section{Typical Concepts}
\label{Typ-Conc}
The partial ordering over objects was introduced as a means to developing a non-monotonic consequence relation. We now show how this ordering, with some refinement, gives rise to a notion of typical concepts.
\subsection{Naive Notions of a Typical Concept}
\label{Typ-Conc_Intro}
\begin{definition}
    \label{typical concept 1}
    Let $\K = (G,M,I,\preceq)$ be an extended formal context. Then, for a set $A \subseteq M$, we define a \emph{typical concept} as a concept of the form \[\left(\left(\minO{A}\right)'', \left(\minO{A}\right)'\right).\] The set of all typical concepts of an extended formal context is denoted $\Tc$.
\end{definition}
In this case, the intent of a typical concept is the set of attributes common to the minimal objects which have $A$ in their intent. The extent is then all objects which have this extended set of attributes. Referring back to \autoref{Example-running}, we add the partial order \texttt{duck} $\preceq$ \texttt{penguin} and \texttt{robin} $\preceq$ \texttt{penguin}, expressing the sentiment that ducks and robins are more \textit{typical} than penguins, but incomparable to orcas. Given only the attribute $\texttt{bird}$, $(\{\texttt{duck,robin}\}, \{\texttt{northern,southern,flies,bird}\})$ is the derived typical concept. Without this notion of a typical concept, the derived concept would be $(\{\texttt{duck,robin,penguin}\},\{\texttt{southern,bird}\})$. Of course, both of these concepts exist in the classical concept lattice; however, typical concepts provide an instrument to, given only the condition of \texttt{bird}, to arrive at a concept which we might consider a more natural characterisation of things that are birds. We should remind ourselves that, while this example is picked to match real-world expectations, we should think of ``expectations'' as being an expression of the preference relation on objects. 

Perhaps un-intuitively, this formulation allows for a typical concept to contain non-typical objects in its extent, as long as they share all the attributes of their typical counterparts. We should recognise, however, that they do not contribute to describing the intent, they just happen to be consistent with it. Although we ascribe specific meaning to $\minO{A}$, it can be considered to be an arbitrary set of objects. Then, per \autoref{def:formal-concept}, an arbitrary typical concept, $((\minO{X})'', (\minO{X})')$ where $X\subseteq M$, is always equivalent to some formal concept. As such, we can define a map from the set of concepts to the set of typical concepts.
\begin{definition}
    \label{map}
    Let $\K = (G,M,I,\preceq)$ be an extended formal context,then $\mathfrak{B}(\K)$ denote the set of concepts of $\K$, and $\Tc$ denote the set of typical concepts of $\K$. We define a map:
    \[\varphi : \Kc \mapsto \Tc\]
    For a concept $(A,B) \in \mathfrak{B}(\K)$, $\varphi$ is defined as:
    \[\varphi(A,B) := ((\minO{B})'', (\minO{B})')\]
\end{definition}
The mapping function inherits idempotency from the minimised-return operator. For a typical concept $(A,B) \in \Tc$, there must exist some set $C\subseteq M$ such that $(\minO{C})' = B$. It follows that $(\minO{(\minO{C})'})' = (\minO{C})' = B$. Consequently, $\varphi(A,B) = (A,B)$, or, typical concepts map to themselves. Differently put, $\varphi(A,B) = \varphi(\varphi(A,B))$. {We also note that by the extensive property of minimized-derivation, for any concept $(A,B)$, we have $(\minO{B})'\supseteq B$ and therefore $\varphi(A,B)=((\minO{B})'', (\minO{B})')\leq (A,B)$. That is, $\varphi$ is \textit{anti-extensive}.
\begin{theorem}
    \label{verify}
    In an extended formal context, $\K = (G,M,I,\preceq)$, let $(A,B) \in \mathfrak{B}({\K})$, then  $(A,B) \in \Tc$ iff $(\minO{B})'' = A$.
\end{theorem}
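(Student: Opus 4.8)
The plan is to establish the biconditional by unwinding the definition of a typical concept and using the idempotence of the minimised-return operator from \autoref{properties minimised-return operator}. Recall that $(A,B)$ being a formal concept means $A' = B$ and $B' = A$, and that $(A,B)$ being a typical concept means there exists some $X \subseteq M$ with $A = (\minO{X})''$ and $B = (\minO{X})'$. So the task reduces to showing: such an $X$ exists if and only if $(\minO{B})'' = A$.

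For the backward direction, suppose $(\minO{B})'' = A$. Since $(A,B)$ is a formal concept, $B = A'$, so $(\minO{B})' = (\minO{B})''' = A'= B$ (using that triple-priming collapses to single-priming for the classical operator, which follows from it being a closure operator). Hence $B = (\minO{B})'$ and $A = (\minO{B})''$, so taking $X := B$ witnesses that $(A,B) \in \Tc$. For the forward direction, suppose $(A,B) \in \Tc$, so $B = (\minO{X})'$ for some $X$. Then $(\minO{B})' = (\minO{(\minO{X})'})' = (\minO{X})' = B$ by idempotence of the minimised-return operator. Consequently $(\minO{B})'' = B' = A$, since $(A,B)$ is a formal concept. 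This closes the loop.

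I expect the main subtlety to be bookkeeping around the composition of the two kinds of primes: one must be careful that $\minO{\cdot}$ takes attributes to objects, that an additional prime takes those objects back to attributes, and that yet another prime returns to objects, so that $(\minO{B})''$ is indeed a set of objects comparable to $A$. The key identity doing the real work is idempotence, $(\minO{(\minO{X})'})' = (\minO{X})'$, which is precisely \autoref{properties minimised-return operator}; everything else is an application of the formal-concept equations $A' = B$, $B' = A$ together with the classical closure identity $C''' = C'$. No genuine obstacle is anticipated; the proof is a short chain of rewrites once the types are kept straight.
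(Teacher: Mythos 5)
Your proof is correct and rests on the same key fact as the paper's, namely the idempotence of the minimised-return operator from \autoref{properties minimised-return operator} (the paper invokes the intermediate identity $\minO{C}=\minO{(\minO{C})'}$, you invoke the operator-level identity $(\minO{(\minO{X})'})'=(\minO{X})'$ and then prime once more — same substance). You also explicitly supply the backward direction (taking $X:=B$ as the witness and using $(\minO{B})'''=(\minO{B})'$), which the paper's own proof leaves implicit, so your write-up is if anything more complete.
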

\begin{proof}
    Let $\K = (G,M,I,\preceq)$ be an extended formal context with a concept, $(A,B) \in \mathfrak{B}  ({\K})$. Suppose $(A,B)$ were a typical concept, then there exists a set $C\subseteq M$ such that $A = (\minO{C})''$ and $B = (\minO{C})'$, per the definition of a typical concept. Through the properties of the minimised-return operator in \autoref{properties minimised-return operator} we have that $\minO{C} = \minO{(\minO{C})'}$, equivalently, $\minO{C} = \minO{B}$. Then, obviously, $(\minO{C})'' = (\minO{B})''$. \qed
\end{proof}
A next point of departure might be to investigate the structure the set of typical concepts. Certainly, we have that $\Tc \subseteq \mathfrak{B}(\K)$. Then in particular, we want to investigate whether the subset of ``typical'' objects forms a sub-lattice, if we can form a sub-lattice of only those ``typical'' concepts, we are able to use all the same algebraic tools to analyze the data in our context ``prototypically'', as opposed to the classical analysis which relies on strict implications and concepts. Using the same approach to ordering typical concepts as we did for formal concepts (see \autoref{Section_FCA}), two examples make it clear that the current formulation unfortunately does not guarantee that the pair $(\Tc, \leq)$, henceforth denoted $\TCl$, will form any type of lattice.
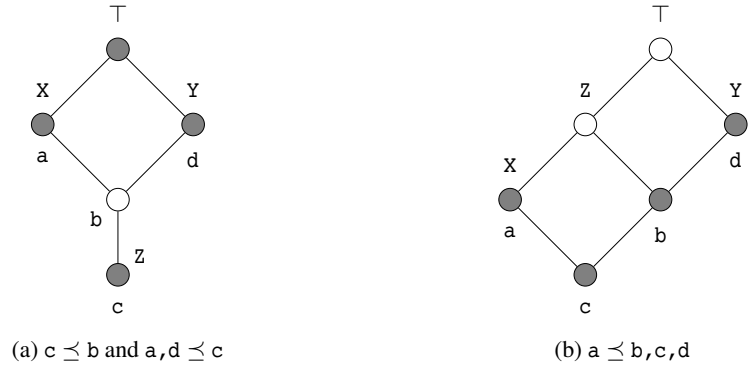
\begin{figure}[t]
    \centering
    \begin{subfigure}[b]{0.45\textwidth}
        \centering
        \begin{tikzpicture}[
                concept/.style={circle, draw, line width=0.3pt, minimum size=0.3cm},
                label_above/.style={font=\footnotesize, above=0.1cm},
                label_below/.style={font=\footnotesize, below=0.1cm},
                line/.style={draw, line width=0.3pt},
                greyed/.style={text opacity=0.5}
            ]
            \node[concept, fill=gray] (top) at (0,3) {};
            \node[label_above] at (top.north) {$\top$};
            \node[concept, fill=gray] (two_1) at (-1,2) {};
            \node[label_above] at (two_1.north) {\texttt{X}};
            \node[label_below] at (two_1.south) {\texttt{a}};
            \node[concept, fill=gray] (two_2) at (1,2) {};
            \node[label_above] at (two_2.north) {\texttt{Y}};
            \node[label_below] at (two_2.south) {\texttt{d}};
            \node[concept] (three_1) at (0,1) {};
            \node[label_below,left=0.1cm] at (three_1.south) {\texttt{b}};
            \node[concept, fill=gray] (four_2) at (0,0) {};
            \node[label_above,right=0.1cm] at (four_2.north) {\texttt{Z}};
            \node[label_below] at (four_2.south) {\texttt{c}};
            \draw[line] (top) -- (two_1);
            \draw[line] (top) -- (two_2) -- (three_1);
            \draw[line] (two_1) -- (three_1) -- (four_2);
        \end{tikzpicture}
        \caption{\texttt{c} $\preceq$ \texttt{b} and \texttt{a,d} $\preceq$ \texttt{c}}
        \label{fig:lattice1}
    \end{subfigure}
    \hfill
    \begin{subfigure}[b]{0.45\textwidth}
        \centering
        \begin{tikzpicture}[
                concept/.style={circle, draw, line width=0.3pt, minimum size=0.3cm},
                label_above/.style={font=\footnotesize, above=0.1cm},
                label_below/.style={font=\footnotesize, below=0.1cm},
                line/.style={draw, line width=0.3pt},
                greyed/.style={text opacity=0.5}
            ]
            \node[concept] (top) at (0,3) {};
            \node[label_above] at (top.north) {$\top$};
            \node[concept] (two_1) at (-1,2) {};
            \node[label_above] at (two_1.north) {\texttt{Z}};
            \node[concept,fill=gray] (two_2) at (1,2) {};
            \node[label_above] at (two_2.north) {\texttt{Y}};
            \node[label_below] at (two_2.south) {\texttt{d}};
            \node[concept, fill=gray] (three_1) at (-2,1) {};
            \node[label_above] at (three_1.north) {\texttt{X}};
            \node[label_below] at (three_1.south) {\texttt{a}};
            \node[concept, fill=gray] (three_2) at (0,1) {};
            \node[label_below] at (three_2.south) {\texttt{b}};
            \node[concept, fill=gray] (four_2) at (-1,0) {};
            \node[label_above] at (four_2.north) {\texttt{}};
            \node[label_below] at (four_2.south) {\texttt{c}};
            \draw[line] (top) -- (two_1) -- (three_1) -- (four_2);
            \draw[line] (top) -- (two_2) -- (three_2);
            \draw[line] (two_1) -- (three_2) -- (four_2);
        \end{tikzpicture}
        \caption{\texttt{a} $\preceq$ \texttt{b,c,d}}
        \label{fig:lattice2}
    \end{subfigure}
    \caption{Classical concept lattices, with the typical concepts derived from the respective ordering marked as grey}
    \label{fig:concept-lattices}
    \vspace{-0.5cm}
\end{figure}
\autoref{fig:lattice1} is the concept lattice of some formal context, $\mathbb{K}$. If one were to extend $\mathbb{K}$ with the partial order \texttt{c} $\preceq$ \texttt{b} and \texttt{a,d} $\preceq$ \texttt{c}, the resultant set of typical concepts would be the four concepts marked with grey nodes. We pay special attention to the concepts
\begin{align*}
    C_1 := \Big((\minO{X})'', (\minO{X})'\Big) & = \Big((\texttt{a})'', (\texttt{a})'\Big)   = \Big(\{\texttt{a,b,c}\}, \{\texttt{X}\}\Big) \\
    C_2 := \Big((\minO{Y})'', (\minO{Y})'\Big) & = \Big((\texttt{d})'', (\texttt{d})'\Big)   = \Big(\{\texttt{b,c,d}\}, \{\texttt{Y}\}\Big)
\end{align*}
The greatest common sub-concept of these two concepts is given by
\begin{align*}
    C_1 \land C_2 & = \big(\{\texttt{a,b,c}\} \cap \{\texttt{b,c,d}\}, (\{\texttt{X}\} \cup \{\texttt{Y}\})''\}\big) = \big( \{\texttt{b,c}\}, \{\texttt{X,Y}\} \big)
\end{align*}
However, from \autoref{verify}, it is clear that this is not a typical concept, as $(\minO{\texttt{\{X,Y\}}})'' = \{\texttt{c}\} \not = \{\texttt{b,c}\}$. This result tells us that the subset of typical concepts is not closed under meets. If we shift attention to \autoref{fig:lattice2}, it becomes clear that it is not closed under joins either, nor does it guarantee a lattice with respect to the concept-lattice ordering. For the typical concepts
\begin{align*}
    C_1 := \Big((\minO{X})'', (\minO{X})'\Big)             & = \Big((\texttt{a})'', (\texttt{a})'\Big)   = \Big(\{\texttt{a,b}\}, \{\texttt{X}\}\Big)       \\
    C_2 := \Big((\minO{\{Y,Z\}})'', (\minO{\{Y,Z\}})'\Big) & = \Big((\texttt{b,c})'', (\texttt{b,c})'\Big)   = \Big(\{\texttt{b,c}\}, \{\texttt{Y,Z}\}\Big)
\end{align*}
the least common super-concept is given by
\begin{align*}
    C_1 \lor C_2 & = \big((\{\texttt{a,c}\} \cup \{\texttt{b,c}\})'', \{\texttt{X,Z}\} \cap \{\texttt{Y,Z}\}\}\big) = \big( \{\texttt{b,c}\}, \{\texttt{Z}\} \big).
\end{align*}
%
For $C_1 \lor C_2$, the closure of the minimised-derivation of the intent, $(\minO{\texttt{Z}})'' = \{\texttt{a,c}\}$, does not equal the concept extent. So, we do not have that typical concepts are closed under joins. To make matters worse, the top element of the lattice is the concept $(\{\texttt{a,b,c,d}\}, \emptyset)$, but not a typical concept, as $(\minO{\emptyset})'' = (\texttt{a})'' = \{\texttt{a,c}\}$, and so we do not have a lattice structure of typical concepts. In fact, we do not even have an upper bound to the set of typical concepts in \autoref{fig:lattice2}.
\subsection{Restriction on the Partial Order}
\label{Typ-Conc_Restr}
One cause of the aforementioned problem lies in the fact that we allow non-minimal objects to be included in typical concepts. This allows for rankings where non-minimal objects, which should be in the meet or join of two concepts, are \textit{lost}. As a solution to this, we may restrict the partial orders to prevent non-minimal objects from being included in typical concepts.
\begin{definition}\label{def:valid-order}
    Let $\K = (G,M,I,\preceq)$ be an extended formal context. A \emph{valid partial order} over objects is one which for all $A \subseteq M$, $\minO{A} = (\minO{A})''$.
\end{definition}
Consider a set of attributes $X\subseteq M$ and then take $Y\subseteq M$ to be the set of all attributes common to the minimal objects which satisfy $X$ -- of course, $Y$ would be the set $X$ with additional attributes. The restriction to orders $\preceq$ that are valid ensures that any object that has all the attributes in $Y$ must itself be minimal w.r.t. $X$. Put differently, $Y$ serves as a total characterisation of the minimal objects satisfying attributes $X$, if an object matches this characterisation, then it should have been one of these minimal objects.
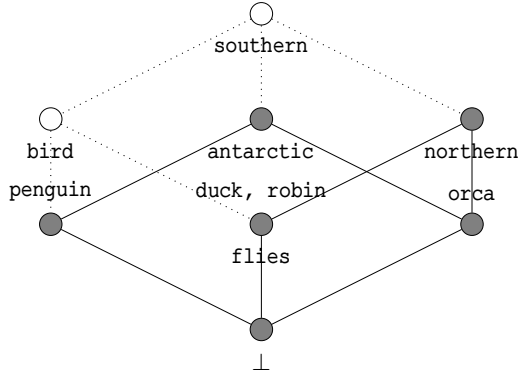
\begin{figure}[t]
    \centering
    \begin{tikzpicture}[
            scale=1, 
            concept/.style={circle, draw, line width=0.3pt, minimum size=0.2cm},
            label_above/.style={font=\small, above=0.05cm},
            label_below/.style={font=\small, below=0.05cm},
            line/.style={draw, line width=0.3pt},
            greyed/.style={text opacity=0.5}
        ]
        \node[concept, greyed] (top) at (0.7,2.1) {};
        \node[label_below, greyed] at (top.south) {\texttt{southern}};
        \node[concept, greyed] (two_1) at (-2.1,0.7) {};
        \node[label_below, greyed] at (two_1.south) {\texttt{bird}};
        \node[concept,fill=gray] (two_2) at (0.7,0.7) {};
        \node[label_below] at (two_2.south)  {\texttt{antarctic}};
        \node[concept,fill=gray] (two_3) at (3.5,0.7) {};
        \node[label_below] at (two_3.south)  {\texttt{northern}};
        \node[concept,fill=gray] (three_1) at (-2.1,-0.7) {};
        \node[label_above] at (three_1.north) {\texttt{penguin}};
        \node[concept,fill=gray] (three_3) at (3.5,-0.7) {};
        \node[label_above] at (three_3.north) {\texttt{orca}};
        \node[concept,fill=gray] (four_1) at (0.7,-0.7) {};
        \node[label_below] at (four_1.south) {\texttt{flies}};
        \node[label_above] at (four_1.north) {\texttt{duck, robin}};
        \node[concept,fill=gray] (bottom) at (0.7,-2.1) {};
        \node[label_below] at (bottom.south) {$\bot$};
        \draw[dotted] (top) -- (two_1);
        \draw[dotted] (top) -- (two_2);
        \draw[dotted] (top) -- (two_3);
        \draw[dotted] (two_1) -- (three_1);
        \draw[line] (three_1) -- (bottom);
        \draw[dotted] (two_1) -- (four_1);
        \draw[line] (two_2) -- (three_1);
        \draw[line] (two_2) -- (three_3);
        \draw[line] (two_3) -- (four_1) -- (bottom);
        \draw[line] (two_3) -- (three_3) -- (bottom);
    \end{tikzpicture}
    \caption{The concept lattice for \autoref{Example-running}. Given the preference order \texttt{duck} $\preceq \texttt{penguin}$ and \texttt{robin} $\preceq$ \texttt{penguin}, the dark grey concepts show the structure of typical concepts}
    \label{fig:example_lattice}
    \vspace{-0.5cm}
\end{figure}
If we accept this restriction on the partial order, the set of typical concepts preserves meets, that is, the greatest common sub-concept of two typical concepts is itself a typical concept. Before this is shown, \autoref{typical concept 1}, \autoref{map}, and \autoref{verify} might be amended to reflect that the closure of a set $\minO{A}$, where $A\subseteq M$, is superfluous.
\begin{theorem}
    For an extended formal context, $\K = (G,M,I,\preceq)$, with a restriction on the partial-order such that for any $A\subseteq M$, $\minO{A} = (\minO{A})''$, the greatest common sub-concept of two typical concepts is itself a typical concept. Hence, the subset of typical concepts form a $\wedge$-subsemilattice of the concept lattice.
\end{theorem}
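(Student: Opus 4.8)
The plan is to take two typical concepts $C_1 = (A_1, B_1)$ and $C_2 = (A_2, B_2)$ in $\Tc$ and show their meet $C_1 \wedge C_2$ is again typical, using the characterisation from \autoref{verify} (as amended under the validity restriction): a concept $(A,B)$ is typical iff $\minO{B} = A$. By the formula for infima in the concept lattice, $C_1 \wedge C_2 = (A_1 \cap A_2, (B_1 \cup B_2)'')$. Writing $B := (B_1 \cup B_2)''$ and $A := A_1 \cap A_2$, what must be shown is $\minO{B} = A$, i.e. $\minO{(B_1 \cup B_2)''} = A_1 \cap A_2$. Since $A_1 = \minO{B_1}$ and $A_2 = \minO{B_2}$ by hypothesis, the goal reduces to proving $\minO{(B_1 \cup B_2)''} = \minO{B_1} \cap \minO{B_2}$.

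First I would establish the inclusion $\minO{(B_1\cup B_2)''} \subseteq \minO{B_1} \cap \minO{B_2}$. Any $g \in \minO{(B_1\cup B_2)''}$ lies in $(B_1 \cup B_2)''{}' = (B_1 \cup B_2)'$, hence in $B_1'$ and in $B_2'$; moreover, if some $h \in B_1'$ had $h \prec g$, I would argue that $h$ can be pushed into the relevant derivation set to contradict minimality — this is where the validity restriction does the work. Concretely: $g$ minimal w.r.t.\ $(B_1\cup B_2)''$ means $g \in \minO{(B_1\cup B_2)''}$, so $g$'s intent contains $((B_1 \cup B_2)'')' {}' = (B_1 \cup B_2)''$; but $\minO{B_1}$ being closed ($\minO{B_1} = (\minO{B_1})''$) together with $g \in B_1'$ and $g$'s richer intent should force $g$ into $\minO{B_1}$, and symmetrically into $\minO{B_2}$. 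For the reverse inclusion $\minO{B_1} \cap \minO{B_2} \subseteq \minO{(B_1 \cup B_2)''}$, take $g \in \minO{B_1} \cap \minO{B_2}$; then $g \in B_1' \cap B_2' = (B_1 \cup B_2)' = (B_1 \cup B_2)''{}'$, so $g$ has all of $(B_1\cup B_2)''$ in its intent. Minimality of $g$ w.r.t.\ $(B_1 \cup B_2)''$ then follows because any $h \prec g$ with $h \in (B_1\cup B_2)''{}'$ would lie in $B_1'$ (and $B_2'$), contradicting $g \in \minO{B_1}$.

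Once $\minO{(B_1\cup B_2)''} = \minO{B_1}\cap \minO{B_2} = A_1 \cap A_2 = A$ is in hand, \autoref{verify} (amended form) gives that $(A,B) = C_1 \wedge C_2$ is typical, completing the argument that $\Tc$ is closed under binary meets. To conclude that $(\Tc, \leq)$ is a $\wedge$-subsemilattice of $\Cl$, I would note closure under binary meets extends to all finite non-empty meets by induction (the context is finite, so $G$ and hence $\Kc$ are finite), and that $\Tc$ is non-empty (e.g.\ $\varphi$ of the bottom concept, or $((\minO{M})'', (\minO{M})') \in \Tc$). I expect the main obstacle to be the careful handling of the forward inclusion: making precise exactly how $\minO{B_i} = (\minO{B_i})''$ is invoked to conclude that an object with a sufficiently large intent and minimal w.r.t.\ the larger attribute set $(B_1\cup B_2)''$ must already have been minimal w.r.t.\ the smaller sets $B_i$ — in the naive (unrestricted) setting this step fails, exactly as the counterexample in \autoref{fig:lattice1} shows, so the proof must visibly rely on the restriction and not merely on monotonicity of $(\cdot)''$.
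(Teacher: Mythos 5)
Your proposal is correct and takes essentially the same route as the paper: both arguments reduce the claim to showing that the intersection of the two extents is itself the minimised derivation of a suitable attribute set (your witness $(B_1\cup B_2)''$ and the paper's $(\minO{B_1})'\cup(\minO{B_2})'$ have the same derivation, hence the same minimised derivation), with the validity restriction supplying $A_i=\minO{B_i}$ and with the reverse inclusion handled by exactly the paper's minimality argument. The one simplification you anticipate as "the main obstacle" is in fact immediate and needs no minimality reasoning: $\minO{(B_1\cup B_2)''}\subseteq (B_1\cup B_2)'=B_1'\cap B_2'=A_1\cap A_2=\minO{B_1}\cap\minO{B_2}$, since $B_i'=A_i$ by the definition of a formal concept and $A_i=\minO{B_i}$ by the amended typicality criterion.
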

\begin{proof}
        We note that it is a sufficient condition to show that $\minO{A}\cap \minO{B}=\minO{C}$ for some  $C\subseteq M$. We claim that we can choose $C=(\minO{A})'\cup (\minO{B})'$. Note
        \begin{align*}
            \minO{A}\cap \minO{B} & = (\minO{A})''\cap (\minO{B})'' =((\minO{A})'\cup (\minO{B})')',
        \end{align*}
        where the first equality follows from our assumed condition, and the second from our preliminary lemma for Galois connections. Then $\underline{\minO{A}\cap \minO{B}}=\underline{((\minO{A})'\cup (\minO{B})')'}$. Note that
        \[ \underline{\minO{A}\cap \minO{B}}=\{g\in \minO{A}\cap \minO{B}\mid \nexists h\in\minO{A}\cap \minO{B}\text{ such that }h<g\}.\]
        Therefore, if $g\in \minO{A}\cap \minO{B}$, since $g$ is a minimal member of $A'$ and of $B'$ there is no $h$ in $A'$ or in $B'$ with $h< g$ by definition and so $g\in \underline{\minO{A}\cap \minO{B}}$. Therefore $\minO{A}\cap \minO{B}\subseteq \underline{\minO{A}\cap \minO{B}}$, and clearly $\underline{\minO{A}\cap \minO{B}}\subseteq \minO{A}\cap \minO{B}$. If the intersection has no elements $\minO{A}\cap \minO{B}=\emptyset=\underline{\emptyset}$. Hence  $\minO{A}\cap \minO{B}=\underline{\minO{A}\cap \minO{B}}=\underline{((\minO{A})'\cup (\minO{B})')'}$ and we are done. \qed
\end{proof}
It is also worth noting here that the bottom element of the concept lattice is always a typical concept since $\varphi(\bot)\leq \bot$, by anti-extensivity, and so $\varphi(\bot)=\bot$. This result allows us to begin characterizing the algebraic structure of  those concepts considered ``typical'' in an extended formal context. However, there is still much structure that this restriction on the order does not account for. In particular, we do not have that $\TCl$ preserves least common super-concepts -- \autoref{fig:lattice2} is an example of a concept lattice whose order satisfies the condition in \autoref{def:valid-order}, but does not have a $\top$-concept. That being said, we are still able to preserve some structure using a somewhat intuitive restriction on the order in our extended formal context. 

\section{Related Work}
\label{Related Work}
There is a large corpus of work on developing KLM-style notions of non-monotonic reasoning outside of the standard of propositional logic. \cite{casini2022klm}, and more extensively \cite{paterson2022klm}, introduce KLM-style defeasible reasoning to datalog, which can be considered a fragment of first-order logic used for database queries. \cite{britz2019klm,britz2020principles} are recent efforts to introduce a notion of defeasible subsumption to description logics (DL) that follows the KLM properties. Remaining in the realm of DL, \cite{moodley2014practical} is a general characterisation of to KLM in DL $\mathcal{ALC}$. \cite{doi:10.1080/11663081.2017.1397325} investigates the KLM framework in the context of defeasible modalities, introducing new modal operators for \textit{defeasible necessity} and \textit{distinct possibility}. 

With respect to work that extends our view of concepts in FCA, \cite{kent1996rough} introduces \textit{Rough Concept Analysis}, a merging of rough set theory and FCA that uses equivalence classes on objects to define upper and lower-bound approximations of concepts. \cite{yao2016rough} investigates an expansion which enables ``rough concepts'' to be defined not only by objects.      
\section{Conclusions and Future Work}
\label{Concl-Fut}
By extending a formal context with a preference relation on the objects we have introduced a non-monotonic variant of implication between attribute sets which characterise a rational consequence relation*. This strictly increases the expressivity of the attribute logic of FCA by creating a notion of non-monotonicity that corresponds to the KLM view of how a \textit{logical} non-monotonic system should behave. In terms of FCA, we have introduced a way of discovering and representing relationships between attribute sets that tolerates exceptions, and is capable of representing what our data (formal context) shows in the \textit{typical} case. 

With a slight restriction on what constitutes a \textit{valid} preference relation over objects, we presented a formalisation of typical concepts which has its foundations in a KLM-style typicality. We were able to show that with this notion we could create a structure of typical concepts that at least preserved the sub-concept relation, and as such is a \textit{meet-subsemilattice} of the original concept lattice. 

To our knowledge, the introduction of KLM-style typicality, and preferences over a formal context as a whole, presents a novel view on, and non-trivial extension to FCA. 

This work represents the initial investigation of introducing KLM-style typicality into FCA. As such, we believe there is a considerable amount that remains to be looked at. Addressing concerns of existing work, we aim to find an approach to defining typical concepts which is closed under joins (super-concepts), essentially meaning that the structure of the set of typical concepts would be a sub-lattice of the original concept lattice. To this end, we could neatly ``reduce'' a concept lattice to its typical counterpart. In another branch of this work, we would like to investigate the relationship between the set of all non-monotonic conditionals and the typical concept lattice; determining if something analogous to a \textit{canonical basis} can be found.   
\begin{credits}
\subsubsection{\ackname} 
This work is a culmination of effort that took place when students at the University of Cape Town visited the Technical University of Dresden. We would like to express immense appreciation and gratitude to the School of Embedded Composite Artificial Intelligence (SECAI) -- project 57616814 funded by BMBF (the Bundesministerium für Bildung und Forschung) and DAAD (German Academic Exchange Service) -- for funding the visit, and to its members in Dresden who were so welcoming and engaging.

\subsubsection{\discintname}The authors have no competing interests to declare that are
relevant to the content of this article.
\end{credits}
%

\clearpage
\appendix 
\label{Appendix}
\section*{Appendix}
In the following we present the proofs for \autoref{Satisfies rational consequence} along with their re-characterisation in attribute logic. In case the re-characterisation is non-trivial, we provide some explanation. \autoref{lemma 1} is a consequence of Galois connections, represented in the attribute logic of FCA. \autoref{lemma-imp} is essential for \textit{Cut} and \textit{CM}.  \\
\begin{align*}
    \quad A \rightsquigarrow A  & & \textbf{Reflexivity} &
\end{align*}
\begin{proof}[Reflexivity]
    Let $\K = (G,M,I,\preceq)$ be an extended formal context with $A\subseteq M$. In order to show that $\K \models A \rightsquigarrow A$ it needs to be shown that $\minO{A} \subseteq A'$. We have this by definition of the minimised derivation, that $\minO{A} = \{ g \in A' \mid \nexists h \in A' \text{ such that } h \preceq g\}$. It is then obvious that $\minO{A} \subseteq A'$ and so $\K \models A\rightsquigarrow A$.
\end{proof}
\begin{align*}
    \quad \frac{ \models A \equiv B, A \rightsquigarrow C}{B \twiddle C}  & & \textbf{Left Logical Equivalence (LLE)} &
\end{align*}
\noindent We do not have an existing notion of equivalence between attribute sets; however, for two sets $A,B\subseteq M$, we say that $A\equiv B$ iff $A = B$.   
\begin{proof}[Left Logical Equivalence]
    Let $\K = (G,M,I,\preceq)$ be an extended formal context, where $A,B,C\subseteq M$, $\K \models A \rightsquigarrow C$ and $A = B$. By assumption we have that $\minO{A} \subseteq C'$ and that $\minO{A} = \minO{B}$. Clearly, we then have that $\minO{B} \subseteq C'$ which is equivalent to $\K \models B\rightsquigarrow C$.
\end{proof}
\begin{align*}
    \quad \frac{ \models A \rightarrow B, C\rightsquigarrow A}{C \rightsquigarrow B}  & & \textbf{Right Weakening (RW)} &
\end{align*}
\begin{proof}[Right Weakening]
    Let $\K = (G,M,I,\preceq)$ be an extended formal context, where $A,B,C\subseteq M$, $\K \models A \rightarrow B$ and $\K \models C \rightsquigarrow A$. The classical implication, $A\rightarrow B$, is equivalent to $A' \subseteq B'$. Furthermore, $C\rightsquigarrow A$ is equivalent to $\minO{C} \subseteq A'$. Through transitivity, we have that $\minO{C} \subseteq A' \subseteq B'$, and so $\minO{C} \subseteq B'$. And so, we have that $\K \models C \rightsquigarrow B$
\end{proof}
\begin{lemma}
    \label{lemma 1}
    For any $A,B \subseteq M$, it is the case that  $A' \cap B' = (A\cup B)'$
\end{lemma}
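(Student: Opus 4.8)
The plan is to prove the set equality $A' \cap B' = (A \cup B)'$ by unwinding the definition of the derivation operator from \autoref{definition: derivation operators} and showing mutual inclusion, or more efficiently, by a direct chain of logical equivalences. First I would take an arbitrary object $g \in G$ and note that $g \in (A \cup B)'$ holds, by definition, precisely when $(g,m) \in I$ for every $m \in A \cup B$. Since membership in $A \cup B$ means membership in $A$ \emph{or} in $B$, the universal quantifier over $A \cup B$ splits: "$(g,m) \in I$ for all $m \in A \cup B$" is equivalent to the conjunction of "$(g,m) \in I$ for all $m \in A$" and "$(g,m) \in I$ for all $m \in B$". The first conjunct is exactly the condition $g \in A'$ and the second is exactly $g \in B'$, so $g \in (A\cup B)'$ iff $g \in A'$ and $g \in B'$, i.e. iff $g \in A' \cap B'$. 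Since $g$ was arbitrary, the two sets coincide.

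The only step requiring the slightest care is the quantifier-splitting observation, namely that $(\forall m \in A \cup B)\, \Phi(m)$ is equivalent to $(\forall m \in A)\, \Phi(m) \wedge (\forall m \in B)\, \Phi(m)$; this is a routine fact about universal quantification distributing over a union of index sets, and I would state it in one line rather than belabour it. There is no genuine obstacle here — the lemma is essentially the statement that the derivation operator, viewed as the left adjoint half of the Galois connection, turns unions into intersections, which is the standard order-reversing property. I would present the proof as a short displayed chain of "iff" steps to keep it compact, taking care to place the whole chain in a single \texttt{align*} or inline rather than splitting it across a blank line.

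\begin{proof}[Lemma]
    Let $g \in G$ be arbitrary. By \autoref{definition: derivation operators}, $g \in (A \cup B)'$ iff $(g,m) \in I$ for all $m \in A \cup B$. Since $m \in A \cup B$ means $m \in A$ or $m \in B$, this condition holds iff $(g,m) \in I$ for all $m \in A$ and $(g,m) \in I$ for all $m \in B$, that is, iff $g \in A'$ and $g \in B'$, i.e. iff $g \in A' \cap B'$. As $g$ was arbitrary, $(A \cup B)' = A' \cap B'$. \qed
\end{proof}
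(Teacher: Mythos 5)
Your proof is correct and uses essentially the same idea as the paper's: unwinding \autoref{definition: derivation operators} and observing that a universal quantifier over $A \cup B$ splits into the conjunction of quantifiers over $A$ and over $B$. The only difference is presentational — you run a single chain of equivalences where the paper argues the two inclusions separately — and if anything your version is tidier, since it avoids the paper's slip of describing $A' \cap B'$ via the condition ``$\forall m \in A \cap B$'' in the second direction.
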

\begin{proof}[Lemma 1]
    For a formal context $\FK$ and two sets $A,B \subseteq M$ we will show the $\subseteq$ in both directions. To begin, let $g$ be an object in $A' \cap B'$. Then, for an arbitrary $m \in A$, it holds that $(g,m) \in I$. Obviously, the same holds for attributes in $B$. So, for an arbitrary $m \in A \cup B$, it must also be the case that $(g,m) \in I$. Given that $g$ and $m$ were arbitrary, this is equivalent to the definition of $(A\cup B)'$ - $\{g \in G \mid \forall m\in A\cup B, (g,m) \in I\}$. Consequently, $A' \cap B' \subseteq (A\cup B)'$.

    To show the other direction is simpler. Let $g$ be an arbitrary object in $(A \cup B)'$, this means that for an arbitrary $m \in (A\cup B)$, $(g,m) \in I$. Trivially, $A \cap B \subseteq (A\cup B)$, so, from before, $(g,m) \in I$ for any $m \in (A\cap B)$. Again, this is our definition for $A' \cap B'$ - $\{g\in G \mid \forall m \in A \cap B, (g,m)\in I \}$ - and we have that $(A \cup B)' \subseteq A' \cap B'$.
\end{proof}
\begin{lemma}
    \label{lemma-imp}
    For some extended formal context $\K = (G,M,I,\preceq)$ where $A,B\subseteq M$, let $\K \models A\rightsquigarrow B$. Then, $\minO{A} = \minO{(A \cup B)}$.
\end{lemma}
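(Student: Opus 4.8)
The plan is to leverage the two equivalent readings of $\K \models A\rightsquigarrow B$ from \autoref{def non-monotonic conditional} (in particular $\minO{A}\subseteq B'$), the identity $A'\cap B' = (A\cup B)'$ from \autoref{lemma 1}, and the finiteness of $G$ to descend to minimal elements. The proof naturally splits into the two set inclusions, with the reverse one being the delicate part.

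First I would establish the chain of inclusions $\minO{A}\subseteq (A\cup B)'\subseteq A'$. The right-hand inclusion is immediate: since $A\subseteq A\cup B$, any object having all attributes of $A\cup B$ has all attributes of $A$, so $(A\cup B)'\subseteq A'$. For the left-hand inclusion, note that $\minO{A}\subseteq A'$ by definition of the minimised derivation, and $\minO{A}\subseteq B'$ is exactly the hypothesis $\K \models A\rightsquigarrow B$; then \autoref{lemma 1} gives $\minO{A}\subseteq A'\cap B' = (A\cup B)'$. The key conceptual point packaged here is that every $\preceq$-minimal object of $A'$ already lies in $(A\cup B)'$.

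For $\minO{A}\subseteq \minO{(A\cup B)}$: take $g\in\minO{A}$; then $g\in(A\cup B)'$ by the inclusion just shown. If some $h\in(A\cup B)'$ had $h\prec g$, then $h\in A'$ as well, contradicting the minimality of $g$ in $A'$; hence $g\in\minO{(A\cup B)}$. For the reverse inclusion $\minO{(A\cup B)}\subseteq\minO{A}$: take $g\in\minO{(A\cup B)}$, so in particular $g\in(A\cup B)'\subseteq A'$. Suppose $g\notin\minO{A}$, i.e. there is $h\in A'$ with $h\prec g$. Since $G$ is finite, the nonempty set $\{k\in A'\mid k\preceq h\}$ has a $\preceq$-minimal element $m$, and $m$ is then minimal in all of $A'$, so $m\in\minO{A}\subseteq(A\cup B)'$ while $m\preceq h\prec g$. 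This contradicts the minimality of $g$ in $(A\cup B)'$, so $g\in\minO{A}$, completing the argument.

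The main obstacle is the reverse inclusion: a non-minimal element $g$ of $A'$ could in principle still be minimal within the smaller set $(A\cup B)'$, and ruling this out needs precisely the two ingredients noted above — that $\minO{A}$ sits inside $(A\cup B)'$, and that finiteness of $G$ lets us replace an arbitrary witness $h\prec g$ by a genuinely $A'$-minimal witness $m$, which is therefore available inside $(A\cup B)'$ to contradict minimality of $g$ there.
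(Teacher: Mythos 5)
Your proof is correct and follows essentially the same route as the paper's: establish $\minO{A}\subseteq A'\cap B'=(A\cup B)'\subseteq A'$ via Lemma~1 and the hypothesis, then prove each inclusion by a minimality argument, with the reverse direction obtained by producing an $A'$-minimal witness below $g$ and transporting it into $(A\cup B)'$ to contradict minimality there. Your version is in fact slightly more careful than the paper's, since you explicitly justify (via finiteness of $G$) the descent from an arbitrary witness $h\prec g$ to a genuinely minimal one, a step the paper leaves implicit.
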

\begin{proof}[Lemma 2]
    Let $\K = (G,M,I,\preceq)$ be some extended formal context such that $K\models A \rightsquigarrow B$, with $A,B\subseteq M$. To begin, let $g\in \minO{A}$. We are given that $\minO{A} \subseteq B'$ and so $g\in B'$. From Reflexivity we have that $\minO{A} \subseteq A'$ and so $g\in A'$. This gives us $g\in A' \cap B'$, which, by \autoref{lemma 1}, means $g\in (A\cup B)'$. Now, assume that $g\not \in \minO{(A\cup B)}$. Since $g\in (A\cup B)'$, there must exist some $h \in \minO{(A\cup B)}$ such that $h\preceq g$. But, by Galois connections and the minimised derivation, it is the case that $\minO{(A\cup B)} \subseteq (A\cup B)' \subseteq A'$ (trivially, $A \subseteq A\cup B$). If we remind ourselves that $g\in \minO{A}$ it becomes apparent that $h \preceq g$ is a contradiction. Since $g$ was arbitrary in $\minO{A}$ we then have $\minO{A} \subseteq \minO{(A\cup B)}$.

    To show the other direction, let $j \in \minO{(A\cup B)}$ and assume $j \not \in \minO{A}$. From above we know that $j\in A'$, and so there must exist some $k \in \minO{A}$ with $k \preceq j$. But we have just shown that $\minO{A} \subseteq \minO{(A\cup B)}$. As such $k \in \minO{(A\cup B)}, j\in \minO{(A\cup B)}$ and $k \preceq j$, which is a contradiction. Once again, $j$ is arbitrary, so we have $\minO{(A\cup B)} \subseteq \minO{A}$.
\end{proof}
\begin{align*}
    \quad \frac{A \cup B \rightsquigarrow C, A \rightsquigarrow B}{A \rightsquigarrow C}  & & \textbf{Cut} &
\end{align*}
In \autoref{Logical Consequence and Nonmonotonic Reasoning} we write $A \cup B$ as a conjunction between two formulae. In propositional logic, this refers to valuations where both formulae are satisfied. Moving to an attribute logic, the equivalent notion is that we refer to those objects which have \textit{all} attributes from $A$ and $B$. 
\begin{proof}[Cut]
    Let $\K = (G,M,I,\preceq)$ be an extended formal context, with sets $A,B,C\subseteq M$ and $\K \models A\cup B \rightsquigarrow C$, $A\rightsquigarrow B$. By assumption, we have $\minO{(A \cup B)} \subseteq C'$ and $\minO{A} \subseteq B'$.\autoref{lemma-imp} states that $\minO{A} = \minO{(A\cup B)}$. Then, $\minO{A} \subseteq C'$ which is equivalent to $A\rightsquigarrow C$.
\end{proof}
\begin{align*}
    \quad \frac{A \rightsquigarrow B, A\rightsquigarrow C}{A \rightsquigarrow B \cup C} & & \textbf{And}
\end{align*}
The conjunction of two formulae in a truth-theoretic logic requires satisfaction of each formula. In attribute logic, the equivalent notion is given by the union of two attribute sets, satisfied by objects with all attributes from both sets.
\begin{proof}[And]
    Let $\K = (G,M,I,\preceq)$ be an extended formal context, with sets $A,B,C\subseteq M$ such that $\K \models A\rightsquigarrow B, A\rightsquigarrow C$. By assumption we know both $\minO{A} \subseteq B'$ and $\minO{A} \subseteq C'$. So, $\minO{A} \subseteq B' \cap C'$. Which, by \autoref{lemma 1}, gives us $\minO{A} \subseteq (B\cup C)'$. This is equivalent to $A \rightsquigarrow B \cup C$
\end{proof}

\begin{align*}
   \quad \frac{A \rightsquigarrow B, A \rightsquigarrow C}{A \cup B \rightsquigarrow C} & & \textbf{Cautious Monotonicity (CM)} &
\end{align*}

\begin{proof}[Cautious Monotonicity]
    Let $\K = (G,M,I,\preceq)$ be an extended formal context, with sets $A,B,C\subseteq M$ such that $\K \models A\rightsquigarrow B, A\rightsquigarrow C$. From our assumptions we have, $\minO{A} \subseteq B'$ and $\minO{A} \subseteq C'$. Then, \autoref{lemma-imp} would give us that $\minO{(A\cup B)} \subseteq C'$, which is equivalent to $A\cup B \rightsquigarrow C$.
\end{proof}

\begin{align*}
    \quad \frac{A \rightsquigarrow B, A \not \rightsquigarrow \neg C}{A \cup C \rightsquigarrow B} & & \textbf{Rational Monotonicity (RM)}
\end{align*}

\begin{proof}[Rational Monotonicity]
    Let $\K = (G,M,I,\preceq)$ be an extended formal context, with sets $A,B,C\subseteq M$ such that $\K \models A\rightsquigarrow B, A \not \rightsquigarrow \neg C$. We have that $\minO{A} \subseteq B'$, since $A\rightsquigarrow B$. Additionally, $\minO{A} \cap C' \not = \emptyset$, from $A \not \rightsquigarrow \neg C$ and \autoref{def: negation in preferential implcation}. A consequence of this is that there exists an object, $g\in \minO{A}$, such that $g$ is an element of $C'$ - the objects with $C$ in their intent. Since $\minO{A}\subseteq A'$, $g\in C'$, and \autoref{lemma 1} we have that $g\in (A \cup B)'$. But also by \autoref{lemma 1}, any object in $(A\cup B)'$ would also be in $A'$, and since $g$ is minimal in $A'$, there cannot be an object in $(A\cup B)'$ that is minimal to $g$. And so, if $A\not \rightsquigarrow \neg C$, then $\minO{A} \cap C'$ is equivalent to $\minO{(A\cup B)}$. Then, since $\minO{A} \cap C' \subseteq \minO{A} \subseteq B'$, we have that $\minO{(A\cup C)}\subseteq B'$ and finally, $A\cup C \rightsquigarrow B$.
\end{proof}
\end{document}